\newtheorem{corollary}{Corollary}
\newtheorem{definition}{Definition}
\newtheorem{observation}{Observation}
\newtheorem{example}{Example}
\title{Metric Dimension Parameterized\\ by Max Leaf Number}
\author{David Eppstein}{eppstein@uci.edu}
\affiliation{Department of Computer Science, University of California, Irvine}
\begin{document}
\maketitle

\begin{abstract}
The \emph{metric dimension} of a graph is the size of the smallest set of vertices whose distances distinguish all pairs of vertices in the graph. We show that this graph invariant may be calculated by an algorithm whose running time is linear in the input graph size, added to a function of the largest possible number of leaves in a spanning tree of the graph.
\end{abstract}

\section{Introduction}

Since its initial formulation, the theory of parameterized complexity has had great success in developing algorithms for $\mathsf{NP}$-hard problems that are general enough to handle all inputs, that are fast on inputs of low complexity (as measured by the parameter of interest), and that degrade gracefully as this parameter increases. For instance, by Courcelle's theorem, a large number of graph properties have fixed-parameter tractable algorithms when parameterized by treewidth~\cite{Cou-IC-90}; these algorithms have running time bounds that are linear in the size of the graph, multiplied by non-polynomial functions of the treewidth. An even larger class of problems (essentially, all monotone graph properties) have fixed-parameter tractable algorithms when parameterized by tree-depth~\cite{NesOss-12}.
Nevertheless, for some important graph problems and parameters, fixed-parameter tractable algorithms with these parameters are unknown or (if standard complexity-theoretic assumptions hold) provably do not exist.  One example of this phenomenon is given by the \emph{metric dimension} of a given graph~\cite{HarMel-AC-76}.

\begin{definition}
A  \emph{locating set} (or \emph{metric basis}) for a graph $G$ is a set $S$ of vertices with the property that, for every two vertices $u$ and $v$ in $G$, there exists a vertex $w\in S$ such that $u$ and $v$ have different distances to $w$. The \emph{metric dimension} of~$G$ is the minimum cardinality of a locating set for~$G$.
\end{definition}

Thus, the locating set gives a set of landmarks that can be used for unambiguous navigation in~$G$, and the metric dimension counts the number of landmarks that are necessary for this purpose~\cite{KhuRagRos-DAM-96}.
The graphs for which the metric dimension is bounded may be recognized in polynomial time, by an obvious brute-force search algorithm that tests whether each tuple with the given size bound is a locating set. Generalizing an algorithm for metric dimension in trees~\cite{HarMel-AC-76}, the metric dimension may also be computed in polynomial time for graphs of bounded \emph{cyclomatic number} (the minimum number of edges the removal of which breaks all cycles)~\cite{EpsLevWoe-WG-12}. However, the exponents of these algorithms depend on their parameters, so they are not fixed-parameter tractable, and the problem does not seem to fit into the standard classes of problems that may be solved efficiently for graphs of bounded treewidth~\cite{DiaPotSer-ESA-12} or tree-depth. Additionally, the metric dimension of a graph is complete for $\mathsf{W}[2]$~\cite{HarNic-CCC-13}, again implying that it is unlikely to be fixed-parameter tractable for its natural parameter. This negative result implies that, in order to find fixed-parameter tractable algorithms for this problem, we must search for weaker parameters that better distinguish the easy instances of these problems from the hard ones.

\begin{figure}[t]
\centering\includegraphics[width=4in]{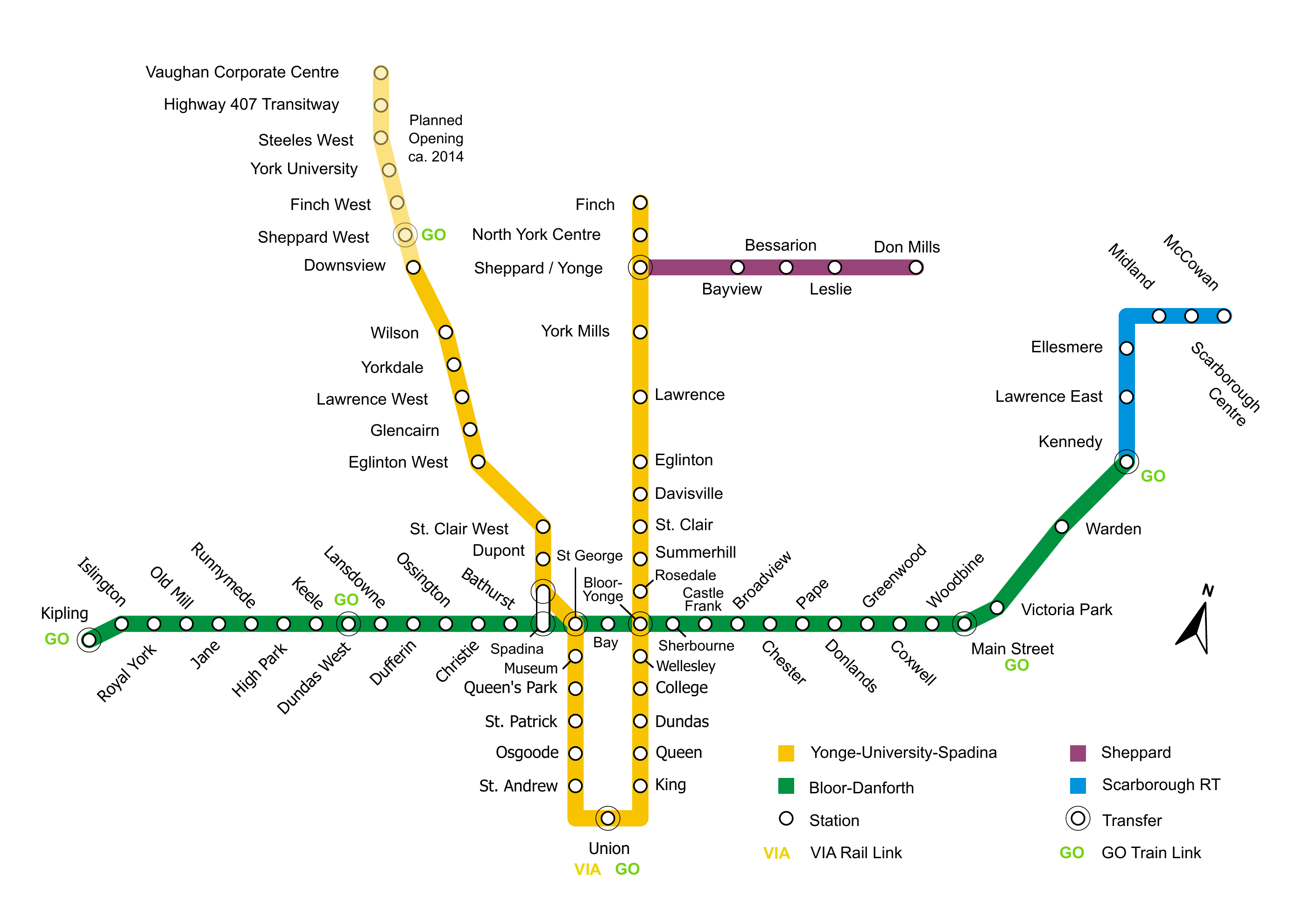}
\caption{The Toronto TTC subway system, a graph with 75 vertices, max leaf number~7, and 8~branches. Public domain image by Paulshannon from \href{http://commons.wikimedia.org/wiki/File:TTCsubwayRTmap-2007.svg}{Wikimedia commons}.}
\label{fig:TTC}
\end{figure}

In this paper, we find such a result, parameterized by the \emph{max leaf number} of a graph.  Our algorithms are particularly efficient for graphs with many degree-two vertices and few vertices of other degrees, which are common for instance in subway and train systems (Figure~\ref{fig:TTC}).

\begin{definition}
The \emph{max leaf number} of a connected graph~$G$ is the maximum, over all spanning trees of~$G$, of the number of leaves in the spanning tree.
\end{definition}

The max leaf number of~$G$ can equivalently be defined as the maximum number of leaves in a star $K_{1,\ell}$ that is a minor of~$G$, because contracting the interior edges of a tree with $\ell$ leaves leads to a star minor~\cite{FelLan-SJDM-92}. It also equals the maximum degree of a minor of~$G$. Because of these equivalent definitions, the max leaf number is minor-monotone. 
Testing whether the max leaf number is at most a given threshold is $\mathsf{NP}$-complete~\cite[ND4, p. 206]{GarJoh-79} but the max leaf number is fixed-parameter tractable with its natural parameter~\cite{FelLan-SJDM-92} and parameterized algorithms for computing it have been the subject of extensive algorithmic research (see e.g. Fernau \emph{et al.}~\cite{FerJoaDie-TCS-11} and their references).
After an initial investigation by Fellows et al.~\cite{FelLokNus-TCS-09},
the max leaf number has by now become one of the standard choices for parameterizing algorithms for other graph problems~\cite{AdiChiSau-ISAAC-10,BodJanKra-TCS-13,FelHerRos-ESA-13,Lam-Algo-12,Lam-LMCS-14}.

\section{Max leaf number versus branches}

Rather than parameterizing our algorithms directly by the max leaf number, it will be convenient for us to instead use a different but (as we prove) functionally equivalent parameter, the number of \emph{branches} in the given graph.

\begin{definition}
A \emph{branch} of a graph $G$ is a maximal path or cycle in which every internal vertex of the path has degree two in $G$. A vertex~$v$ \emph{belongs} to a branch if $v$ is incident to an edge of the branch and it is not incident to edges of any other branches.
\end{definition}

\begin{lemma}
\label{lem:branch-upper-bound-by-leaf}
In any connected graph with max leaf number $\ell$, there can be at most $O(\ell^2)$ branches.
\end{lemma}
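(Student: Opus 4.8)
The plan is to reduce the problem to bounding the number of vertices of degree different from two, and then to bound the branches by a degree sum. Write $H$ for the set of vertices of degree at least three and $L$ for the set of degree-one vertices (the leaves of $G$). Every branch begins and ends at a vertex of $H\cup L$, and each edge incident to such a vertex is the first edge of exactly one branch, so the number of branches equals $\tfrac12\sum_{v\in H\cup L}\deg(v)=\tfrac12\bigl(|L|+\sum_{v\in H}\deg(v)\bigr)$ (a lone cycle component being the only, harmless, exception). Two easy observations tame the right-hand side. First, every degree-one vertex of $G$ is a leaf of every spanning tree, so $|L|\le\ell$. Second, since $G$ is a minor of itself and the max leaf number equals the largest degree of a minor of $G$, the maximum degree of $G$ is at most $\ell$. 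Thus the number of branches is at most $\tfrac12(\ell+\ell|H|)$, and it suffices to prove that $|H|=O(\ell)$.

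For the main step I would pass to the multigraph $G'$ obtained by suppressing all degree-two vertices of $G$. This is a minor of $G$, so by minor-monotonicity its max leaf number is still at most $\ell$; its degree-one vertices are exactly those of $G$, so it has at most $\ell$ of them, and every other vertex has degree at least three. I would then strip $G'$ down to a graph of minimum degree three by repeatedly contracting a degree-one vertex into its neighbour and suppressing any degree-two vertex this creates, stopping at a (possibly empty) minimum-degree-three multigraph $G^\ast$. Both operations are minor operations, so the max leaf number of $G^\ast$ is at most $\ell$ as well. The key bookkeeping observation is that no step ever creates a \emph{new} degree-one vertex: a contraction only lowers one neighbour's degree by one, and a vertex of degree three thereby drops only to degree two and is then suppressed. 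Hence the number of contractions performed is at most the $\le\ell$ original degree-one vertices, so at most $\ell$ vertices of $H$ are consumed before we reach $G^\ast$, giving $|H|\le |V(G^\ast)|+\ell$.

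It remains to bound $|V(G^\ast)|$. Here I would invoke the classical fact (Kleitman and West; Linial and Sturtevant) that a connected graph of minimum degree three on $n$ vertices has a spanning tree with at least $n/4+2$ leaves: contracting the internal edges of such a tree exhibits a $K_{1,m}$ minor with $m\ge n/4$, whence $n/4\le\ell$ and $|V(G^\ast)|\le 4\ell$. Combining this with the previous bound gives $|H|\le 5\ell=O(\ell)$, and substituting back into the branch count yields $O(\ell^2)$ branches — matching the behaviour of $K_n$, which has $\binom n2$ branches and max leaf number $n-1$, so the quadratic dependence is genuinely necessary.

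The main obstacle, and the place demanding the most care, is the reduction to minimum degree three: one must verify rigorously that the contraction/suppression process creates no new leaves (so that the number of consumed high-degree vertices stays $O(\ell)$) and handle the multigraph artefacts — parallel edges and self-loops produced by suppression — without disturbing either connectivity or the max leaf number. By comparison, the degree-sum counting of branches and the two preliminary bounds on $|L|$ and on the maximum degree are routine.
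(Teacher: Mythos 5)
Your proof is correct in substance but takes a genuinely different route from the paper's. The paper argues the contrapositive: starting from $b$ branches it constructs a spanning tree with $\Omega(\sqrt b)$ leaves by a three-way case analysis (many pendant branches; or a 2-core with few high-degree vertices, one of which must then be the endpoint of many branches and so yields a large star minor; or a 2-core with many high-degree vertices, to which Kleitman--West is applied after contracting branches to edges). You instead count forward: the number of branches is $\tfrac12\sum_{v\in H\cup L}\deg(v)$, the maximum degree and the number of pendant vertices are each at most $\ell$ because the max leaf number dominates the largest star minor, and $|H|=O(\ell)$ by a leaf-stripping reduction to a minimum-degree-three minor $G^\ast$ followed by Kleitman--West. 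Your version buys explicit constants ($|H|\le 5\ell$, at most $\tfrac12(\ell+5\ell^2)$ branches) and replaces the case analysis by a single monotone elimination process, at the price of having to verify that no new leaves are created (which you do) and of working with multigraphs. The one step that genuinely needs care --- and you flag it yourself --- is invoking Kleitman--West on the multigraph $G^\ast$: the theorem as usually stated is for simple graphs and fails for multigraphs of minimum degree three (a cycle with every edge doubled has minimum degree four but max leaf number two, and such a configuration can actually arise as $G^\ast$ when $G$ is a cycle of theta-subgraphs), so you must either pass to the underlying simple graph and handle the vertices whose simple degree drops below three, or credit the extra leaves that parallel branches and attached cycles supply in $G$ itself. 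This is not a point on which your argument is weaker than the paper's, however: the published proof applies Kleitman--West to the graph obtained by contracting each branch of the 2-core to a single edge, which can equally be a multigraph, so both arguments require the same patch.
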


\begin{proof}
We prove in the opposite direction that if there are $b$ branches then there is a tree with $\Omega(\sqrt b)$ leaves.
So, suppose that we have a connected graph $G$ with $b$ branches.
We partition into cases:
\begin{itemize}
\item Suppose that at least $\sqrt b$ of these branches end in a degree-one vertex.
Then contracting all the other branches leaves a tree with at least $\sqrt b$ leaves.
\item If we are not in the previous case, form a graph $G'$ (the 2-core of~$G$) by recursively removing all degree-one branches from $G$. There can be $O(\sqrt b)$ removed branches, and each removed branch may cause two remaining branches to merge, so $G'$ has $b-O(\sqrt b)$ branches. Let $t$ be the number of vertices in $G'$ of degree three or more. If $t=O(\sqrt b)$, then at least one of these vertices must have $\Omega(\sqrt b)$ branches incident to it, giving a tree with $\Omega(\sqrt b)$ leaves.
\item In the remaining case, we have $b-O(\sqrt b)$ branches in $G'$ and $t=\Omega(\sqrt b)$ vertices of degree three or more. Contracting each branch of $G'$ to a single edge forms a graph with $t=\Omega(\sqrt b)$ vertices, each of which has degree at least three. A classical theorem of Kleitman and West~\cite{KleWes-SJDM-91} implies that the contracted graph has a tree with $\Omega(\sqrt b)$ leaves. Undoing the contraction results in a tree with the same number of leaves in $G$ itself.
\end{itemize}
Thus in every case $G$ has max leaf number $\Omega(\sqrt b)$
\end{proof}

\begin{example}
A complete graph $K_n$ has max leaf number $n-1$ and  $n(n-1)/2$ branches. This example shows that \autoref{lem:branch-upper-bound-by-leaf} is asymptotically tight.
\end{example} 

\begin{lemma}
\label{lem:leaf-upper-bound-by-branch}
Every connected graph with $b>0$ branches has max leaf number at most $2b$.
\end{lemma}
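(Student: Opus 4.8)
The plan is to bound the number $L$ of leaves of an \emph{arbitrary} spanning tree $T$ and then take the maximum at the very end, since the max leaf number is just the largest such $L$. The basic tool is the leaf-counting identity for trees, $L = 2 + \sum_{v}\max(0,\deg_T(v)-2)$, which shows that $L$ is controlled entirely by the vertices of $T$ of degree at least three. The point of working with this identity is that it automatically accounts for every leaf of $T$, including degree-two vertices of $G$ that happen to become leaves in $T$, so no separate case analysis of such vertices is needed. The key observation is that high-$T$-degree vertices are scarce here: since $\deg_T(v)\le\deg_G(v)$, any $v$ with $\deg_T(v)\ge 3$ already has $\deg_G(v)\ge 3$, and moreover $\deg_T(v)-2\le\deg_G(v)-2$. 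Extending the sum to run over all vertices of $G$-degree at least three, which only adds nonnegative terms, then gives $L \le 2 + \sum_{v:\deg_G(v)\ge 3}(\deg_G(v)-2)$, a static quantity depending on $G$ alone.

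Second, I would tie $\sum_{v:\deg_G(v)\ge 3}\deg_G(v)$ to the number of branches. Setting aside the case where $G$ is a single cycle, every internal vertex of a branch has degree two, so by maximality each branch is a path whose two endpoints have degree different from two, and each edge of $G$ lies in exactly one branch. Counting incidences between branches and their endpoints from the two sides (a branch that begins and ends at the same vertex being counted there twice) yields $\sum_{v:\deg_G(v)\ne 2}\deg_G(v)=2b$: each edge at a vertex of degree $\ne 2$ is the terminal edge of the branch ending there, and there are exactly $2b$ branch endpoints. Subtracting the contribution $n_1$ of the degree-one vertices gives $\sum_{v:\deg_G(v)\ge 3}\deg_G(v)=2b-n_1$.

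Combining the two steps produces $L\le 2b + (2 - n_1 - 2n_{\ge 3})$, where $n_{\ge 3}$ is the number of vertices of degree at least three, so it remains to show the parenthesized correction is nonpositive. This is where I expect the main obstacle, namely the degenerate cases: whenever $G$ has a vertex of degree at least three we have $2n_{\ge 3}\ge 2$ and are done, but the additive constant $2$ from the leaf identity sits exactly on the boundary and must be absorbed by hand when $n_{\ge 3}=0$. I would dispose of those cases directly: a connected graph with a degree-$\ne 2$ vertex but no degree-$\ge 3$ vertex is a path, for which $n_1=2$ and the correction vanishes; and a connected graph with no degree-$\ne 2$ vertex at all is a single cycle, which has $b=1$ and spanning-tree leaf number exactly $2=2b$. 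Since $L\le 2b$ then holds for every spanning tree, the max leaf number is at most $2b$.
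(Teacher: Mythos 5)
Your proof is correct, but it follows a genuinely different route from the paper's. The paper argues top-down: it takes a spanning tree $T$ with $\ell$ leaves, observes that $T$ itself has at least $\ell$ branches when $\ell>2$, and then bounds the number of branches of $T$ by $2b$ via the claim that each branch of $G$ gives rise to at most two branches of $T$ (a surviving piece of a $G$-branch must contain one of that branch's endpoints to stay connected to the rest of the tree); the path case $\ell\le 2$ is handled separately. You instead start from the leaf identity $L=2+\sum_{v}\max\bigl(0,\deg_T(v)-2\bigr)$, relax $\deg_T$ to $\deg_G$, and then convert $\sum_{v:\deg_G(v)\ne 2}\deg_G(v)$ into $2b$ by counting incidences between branch-ends and edges at vertices of degree $\ne 2$ (with closed branches counted twice at their base vertex), finishing with the path and cycle cases by hand --- which is exactly where they must appear, since those are the only configurations in which the additive $+2$ is not absorbed by a degree-three vertex. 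Both arguments are elementary and short; the paper's avoids any degree bookkeeping but leans on the slightly delicate ``at most two $T$-branches per $G$-branch'' step, whereas yours is fully self-contained, makes the role of the degenerate cases transparent, and in fact yields the marginally sharper bound $\ell\le 2b-n_1$ whenever $G$ has a vertex of degree at least three. The one step you should spell out carefully in a final write-up is that, once the pure cycle is excluded, every branch really does terminate (at both ends) at vertices of degree $\ne 2$, since that is what makes the incidence count exact.
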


\begin{proof}
If a graph $G$ has max leaf number $\ell$, then it has a tree $T$ with $\ell$ leaves, and therefore (if $\ell>2$) it has at least $\ell$ branches. Each branch of $G$ can give rise to at most two branches of~$T$, so the number $b$ of branches of~$G$ obeys the inequality $2b\ge \ell$. The case when $\ell\le 2$ is even easier, for  in this case the graph must be a path with $b=1$ and $\ell=2$.
\end{proof}

\begin{corollary}
\label{cor:equiv-params}
Any graph algorithm that is fixed-parameter tractable for max leaf number is fixed-parameter tractable for the number of branches, and vice versa.
\end{corollary}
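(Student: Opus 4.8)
The plan is to derive the corollary directly from the two preceding lemmas, which together show that the max leaf number $\ell$ and the number of branches $b$ are each bounded by a computable (indeed polynomial) function of the other. Concretely, \autoref{lem:leaf-upper-bound-by-branch} gives $\ell\le 2b$, while \autoref{lem:branch-upper-bound-by-leaf} gives $b=O(\ell^2)$, say $b\le c\ell^2$ for some absolute constant $c$. Once this two-sided relationship is in hand, the statement reduces to the general principle that two parameters tied together by mutual computable bounds induce the same notion of fixed-parameter tractability.

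First I would recall the definition: an algorithm is fixed-parameter tractable for a parameter $k$ if it runs in time $f(k)\cdot n^{O(1)}$ for some computable function $f$, and I would invoke the standard fact that $f$ may be taken non-decreasing without loss of generality (otherwise replace it by $k\mapsto\max_{j\le k}f(j)$, which is still computable). With this normalization, both directions become a matter of substituting one bound into the other.

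For the forward direction, suppose we are given an algorithm that is fixed-parameter tractable for max leaf number, running in time $f(\ell)\cdot n^{O(1)}$. Substituting the bound $\ell\le 2b$ of \autoref{lem:leaf-upper-bound-by-branch} and using monotonicity yields $f(\ell)\le f(2b)$, so the same algorithm runs in time $f(2b)\cdot n^{O(1)}$, which is fixed-parameter tractable for $b$. The reverse direction is symmetric: an algorithm running in time $g(b)\cdot n^{O(1)}$ satisfies $g(b)\le g(c\ell^2)$ by \autoref{lem:branch-upper-bound-by-leaf}, giving running time $g(c\ell^2)\cdot n^{O(1)}$, fixed-parameter tractable for $\ell$.

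The one point that requires care — and the only place I expect any friction — is whether the algorithm needs the numerical value of its parameter as input, since the max leaf number is $\mathsf{NP}$-hard to compute exactly. This can be absorbed into the target budget in each direction. When converting a leaf-parameterized algorithm into a branch-parameterized one, the needed value $\ell$ can be produced by the known fixed-parameter algorithm for max leaf number, whose running time $h(\ell)\cdot n^{O(1)}\le h(2b)\cdot n^{O(1)}$ is itself fixed-parameter tractable for $b$; going the other way, the number of branches $b$ is obtainable directly from the branch decomposition in linear time, so it is available for free. Thus in both directions the preprocessing needed to supply the parameter stays inside the allowed running time, and the claimed equivalence follows.
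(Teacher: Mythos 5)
Your proposal is correct and matches the paper's approach: the paper proves this corollary simply by citing \autoref{lem:branch-upper-bound-by-leaf} and \autoref{lem:leaf-upper-bound-by-branch}, leaving implicit exactly the routine substitution-and-monotonicity argument you spell out. The extra care you take about whether the algorithm needs the numerical parameter value is a reasonable (and correctly handled) detail the paper does not discuss.
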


\begin{proof}
This follows immediately from \autoref{lem:branch-upper-bound-by-leaf} and \autoref{lem:leaf-upper-bound-by-branch}.
\end{proof}

\section{Metric dimension}

With these preliminaries about numbers of branches in hand, we are ready to start describing our algorithm for the metric dimension.

\subsection{Indistinct sets}

In a graph with a small number of branches, a single vertex in a locating set will necessary distinguish most of the pairs of vertices in the graph.

\begin{lemma}
\label{lem:monotonic}
Let $G$ be a graph, $B$ be a branch of $G$, and $s$ be any vertex of $G$. Then $B$ may be partitioned into at most three contiguous paths within which the distance from $s$ is monotonic.
\end{lemma}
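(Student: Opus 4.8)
The plan is to exploit the hypothesis that every internal vertex of $B$ has degree two. Because of this, a shortest path from $s$ to a branch vertex can enter or leave $B$ only at an endpoint of $B$ (and it may of course begin at $s$ itself, if $s$ happens to lie on $B$). I would first turn this into an explicit formula. Writing a path branch as $v_0,\dots,v_k$ and using $i$ as the position along the branch, the distance $g(i)=d(s,v_i)$ is then a pointwise minimum of a few elementary functions of $i$: an incoming ramp $d(s,v_0)+i$ coming in from the left endpoint, an incoming ramp $d(s,v_k)+(k-i)$ coming in from the right endpoint, and --- only when $s=v_j$ actually lies on $B$ --- the within-branch distance $|i-j|$. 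Establishing this min-formula reduces the whole lemma to a question about the shape of a lower envelope.

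When $s$ is not an internal vertex of $B$, the third term is absent, so $g$ is the pointwise minimum of two affine functions and is therefore concave, hence unimodal: it increases and then decreases, and $B$ splits into at most two monotone paths. The same reasoning disposes of a cycle branch. Since a cycle branch meets the rest of $G$ in at most one vertex $a$, every cycle vertex $v_i$ satisfies $d(s,v_i)=d(s,a)+(\text{cyclic distance from }a\text{ to }v_i)$, or simply the cyclic distance from $s$ when $s$ lies on the cycle; in either case the distance has a single minimum and a single maximum as one travels around, so the cycle cuts into two monotone arcs.

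The only case that genuinely forces three pieces is a path branch with $s=v_j$ internal, where the convex term $|i-j|$ destroys global concavity; this is the hard part. Here I would split $B$ at $s$ into a left arm $v_0,\dots,v_j$ and a right arm $v_j,\dots,v_k$. On each arm $g$ is again a minimum of the direct ramp leaving $s$ and a single incoming ramp from that arm's far endpoint, so each arm is unimodal and contributes at most two monotone pieces. Naively this would allow four pieces, so the decisive step is to rule out both arms being non-monotone simultaneously. An arm is non-monotone exactly when its far endpoint is reached more cheaply by travelling out the \emph{other} arm and around $G$ than directly along the arm itself; writing $\alpha=j$ and $\beta=k-j$ for the two arm lengths and $L$ for the length of a shortest $v_0$--$v_k$ path avoiding $B$, these two conditions are $L<\alpha-\beta$ and $L<\beta-\alpha$. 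Since $L\ge 0$ and the two right-hand sides sum to zero, at most one of them can hold, so at most one arm is non-monotone and $B$ breaks into at most $2+1=3$ monotone paths. I would close by noting that the bound is tight: a long path branch whose two endpoints are joined by a short path elsewhere in $G$, with $s$ placed off-centre on the branch, realises the increase--decrease--increase pattern.
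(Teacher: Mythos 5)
Your proof is correct, and it reaches the bound by a genuinely different route from the paper's. The paper splits $B$ at $s$ and at the point $v$ of $B$ farthest from $s$, and then argues qualitatively that none of the resulting pieces can contain an interior local extremum of the distance (a local maximum would force a local minimum between it and $v$, and the only possible local minimum is $s$; for the third piece, a shortest path to the far endpoint would have to avoid both the putative maximum and $v$, which is impossible inside a branch). You instead make the distance function explicit as a lower envelope of at most three linear ramps, reduce everything to the shape of that envelope, and replace the paper's path-avoidance argument with the arithmetic observation that the two non-monotonicity conditions $L<\alpha-\beta$ and $L<\beta-\alpha$ are mutually exclusive because $L\ge 0$. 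The two proofs encode the same underlying fact --- the farthest point is the only candidate interior maximum and it can lie strictly inside at most one of the two arms --- but yours is more quantitative: the min-of-ramps formula is exactly the structural information the rest of the paper needs (it is what makes the indistinct sets unions of $O(1)$ diagonal segments and what drives Lemma~\ref{lem:const-breaks}), so your version arguably does more useful work per line. You also treat cycle branches explicitly, which the paper's proof passes over in silence even though its definition of branch admits them. The only nitpick is that $L$ should be defined as the length of a shortest $v_0$--$v_k$ path avoiding the \emph{interior} of $B$ (the endpoints themselves may have outside neighbours and must be usable), but this does not affect the argument.
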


\begin{proof}
If $s$ is not within $B$, then let $v$ be the point of $B$ where distance from $s$ is largest. Then splitting $B$ into two paths at $v$ necessarily gives two contiguous paths on which the distance is monotonic: neither path can contain a local minimum of distance, because the only possible such point within a path is $s$ itself, and neither path can contain a local maximum, because there would have to be a local minimum between any local maximum and~$v$.

If $s$ is within $B$, then split $B$ into three paths at $s$ and at the point $v$ where distance from $s$ is largest. The two paths ending at $v$ are monotone for the same reason as before. The third path, from $s$ to the other endpoint $w$ of $B$, must also be monotone. For, if it had a local maximum at a point $u$, then the shortest path from $s$ to $w$ would be shorter than both of the paths from $s$ to $u$ and $s$ to $v$, and would therefore have to avoid both $u$ and $v$, but there is no path in $G$ from $s$ to $w$ that avoids both $u$ and $v$.
\end{proof}

\begin{definition}
Let $G$ be a graph, with $A$ and $B$ being two of its branches, and let $s$ be a vertex in a locating set for $G$.
Then the \emph{indistinct set} for $s$, $A$, and $B$ is defined to be the set of pairs  of vertices $(a,b)$ with $a\in A$ and $b\in B$ with $d(s,a)=d(s,b)$. We do not require $A$ and $B$ to be distinct, so $A=B$ is allowed in this definition.
\end{definition}

\begin{lemma}
Let $G$ be a graph, with $A$ and $B$ being two of its branches, and let $s$ be a vertex in a locating set for $G$.
Then the indistinct set for $s$, $A$, and $B$ has size $O(\min(|A|,|B|))$ .
\end{lemma}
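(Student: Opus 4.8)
The plan is to combine the monotonicity structure from \autoref{lem:monotonic} with the rigidity forced by the degree-two interiors of branches. First I would apply \autoref{lem:monotonic} to both $A$ and $B$, splitting each into at most three contiguous pieces on which $d(s,\cdot)$ is monotone. This reduces the problem to bounding, for each of the at most nine pairs consisting of one piece $A'$ of $A$ and one piece $B'$ of $B$, the number of indistinct pairs $(a,b)$ with $a\in A'$, $b\in B'$, and $d(s,a)=d(s,b)$; summing nine such bounds will only cost a constant factor.

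The key step is to upgrade ``monotone'' to ``injective''. On a monotone piece, I would argue that the distance from $s$ is in fact strictly monotone, changing by exactly one at each step along the branch. This is where the degree-two hypothesis is essential: every interior vertex $v_i$ of a branch has its only neighbors along the branch, so any shortest path from $s$ to $v_i$ must enter the branch through one of its two endpoints. Consequently $d(s,v_i)$ equals the minimum of two affine functions of the position $i$, one of slope $+1$ and one of slope $-1$; on the active side of this minimum the distance increases or decreases by exactly one per vertex. Hence $d(s,\cdot)$ is injective on each monotone piece, apart from at most one repeated value near the apex where the two affine pieces meet, which contributes only an additive constant.

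With injectivity in hand, the counting is immediate: for each value $t$ there is at most one $a\in A'$ and at most one $b\in B'$ with $d(s,a)=d(s,b)=t$, so the indistinct pairs within $(A',B')$ are in bijection with a subset of the common distance values. Since $A'$ realizes at most $|A'|$ distinct distances and $B'$ at most $|B'|$, the number of common values is at most $\min(|A'|,|B'|)\le\min(|A|,|B|)$. Adding the nine piece-pair contributions and the constant apex corrections yields the desired $O(\min(|A|,|B|))$ bound.

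I expect the main obstacle to be the second step, namely justifying strict, unit-step monotonicity rather than merely weak monotonicity. Weak monotonicity alone would permit a single distance value to repeat arbitrarily often along a piece, which would make the indistinct set as large as $|A|\cdot|B|$; ruling this out is exactly what the degree-two structure of branches buys us, and I would take care to verify the min-of-two-affine-functions description in the remaining cases of \autoref{lem:monotonic}, namely when $s$ lies inside the branch, giving three pieces, and when the branch is a cycle.
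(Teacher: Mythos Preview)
Your proposal is correct and follows essentially the same approach as the paper: split each branch into at most three monotone pieces via \autoref{lem:monotonic}, observe that each distance value is realized $O(1)$ times on each branch, and sum over the at most $\min(|A|,|B|)$ possible distance values. The paper compresses your second step into the bare assertion that ``there are $O(1)$ points in both $A$ and $B$ that have a given distance $d$ from $s$''; your explicit argument that $d(s,\cdot)$ along a branch is a minimum of two unit-slope affine functions (forced by the degree-two interiors) is exactly the justification that makes this assertion go through, and it fills in a detail the paper leaves implicit.
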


\begin{proof}
By \autoref{lem:monotonic} the vertices in $A$ and in $B$ may be divided into at most three paths per branch, within which the distance from $s$ is monotonic.
Therefore, there are $O(1)$ points in both $A$ and $B$ that have a given distance $d$ from $s$,
and only $O(1)$ pairs of one point from $A$ and one point from $B$ that both have this distance.
The total number of pairs that are not distinguished is the sum of this $O(1)$ bound over
the at most $\min(|A|,|B|)$ different distances that need to be distinguished.
\end{proof}

\begin{figure}[t]
\centering\includegraphics[width=4.5in]{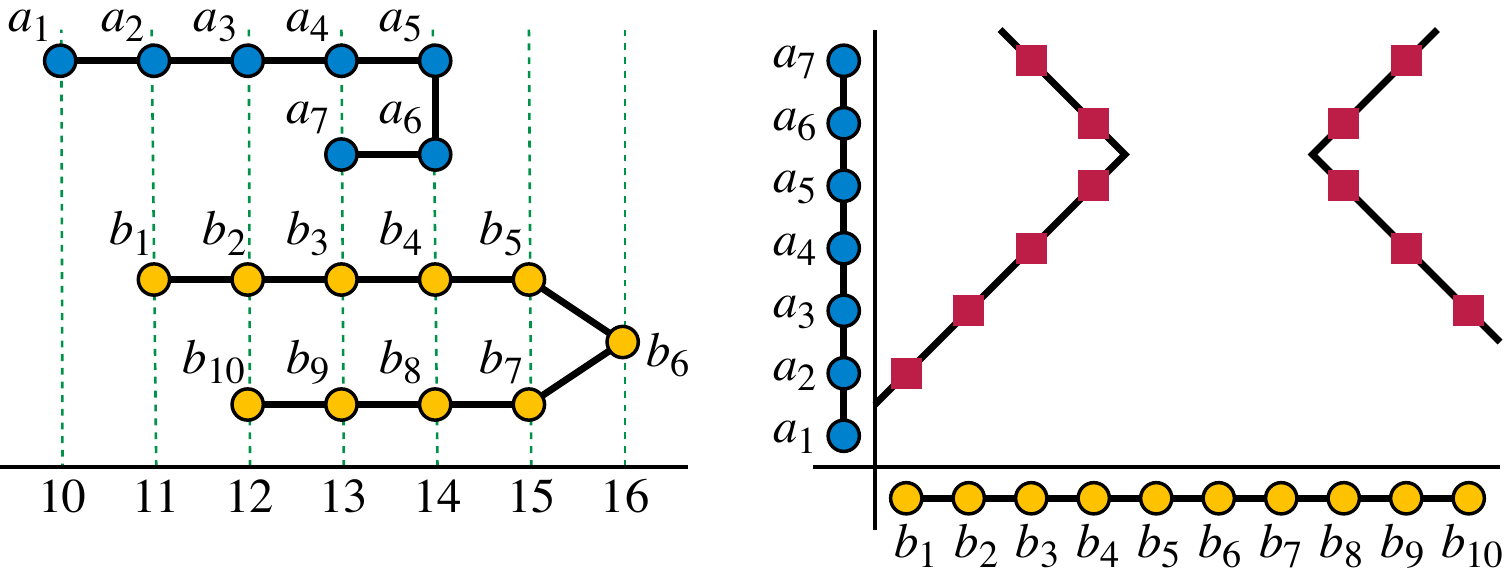}
\caption{Two branches $A$ and $B$ arranged by their distances from a locating point $s$ (left), and their indistinct set (of pairs not distinguished by $s$) plotted using the positions in the branches as Cartesian coordinates (right).}
\label{fig:indistinct}
\end{figure}

When plotted in two dimensions, with the position of $a$ in $A$ as one Cartesian coordinate and the position of $b$ in $B$ as the other, an indistinct set has the structure of $O(1)$ line segments with slopes $\pm 1$ (\autoref{fig:indistinct}). By rotating this coordinate system by $45^\circ$ we may use a more convenient coordinate system in which these segments are all horizontal or vertical, rather than diagonal. However, we must be careful when using this rotated system: only half of the integer points (the ones with even sums of coordinates) correspond to the integer points in the un-rotated system, which are the only points that can be members of an indistinct set.

\begin{observation}
Set $S$ is a valid locating set if, for every pair of branches $A$ and $B$, the different indistinct sets for the different points in $S$ have an empty intersection.
\end{observation}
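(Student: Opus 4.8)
The plan is to verify that the stated condition is simply the definition of a locating set re-expressed through the indistinct sets, so the proof amounts to unwinding definitions in both directions (the statement asserts an ``if,'' but the underlying equivalence is what makes it useful). Write $I(s,A,B)$ for the indistinct set for $s$, $A$, and $B$. Recall that $S$ \emph{fails} to be a locating set precisely when there is a pair of distinct vertices $u,v$ that no $s\in S$ distinguishes, i.e.\ $d(s,u)=d(s,v)$ for all $s\in S$. The goal is to match this failure condition against a nonempty intersection $\bigcap_{s\in S} I(s,A,B)$.

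First I would record the bookkeeping that makes the correspondence complete. Since $G$ is connected and nontrivial, every vertex is incident to at least one edge, and every edge lies in exactly one branch; hence every vertex is incident to at least one branch. So, given distinct vertices $u,v$, I may choose branches $A$ and $B$ with $u$ incident to $A$ and $v$ incident to $B$. Crucially, whether $(u,v)\in I(s,A,B)$ depends only on the comparison of $d(s,u)$ and $d(s,v)$ and not on which branches were selected, so this choice is harmless even for the endpoint vertices of degree three or more that are shared among several branches.

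With this in place, both directions are immediate. If some intersection $\bigcap_{s\in S} I(s,A,B)$ contains a pair of distinct vertices $(a,b)$, then $d(s,a)=d(s,b)$ for every $s\in S$, so $S$ separates no such point and is not a locating set. Conversely, if $S$ is not a locating set, an undistinguished pair of distinct vertices $u,v$ lies in $I(s,A,B)$ for all $s$ once we pick branches $A\ni u$ and $B\ni v$, making the intersection nonempty; the contrapositive of this second implication is exactly the asserted sufficient condition.

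The step demanding the most care is the diagonal, which I expect to be the main subtlety rather than any real difficulty. When $A=B$, every pair $(a,a)$ trivially satisfies $d(s,a)=d(s,a)$ and so lies in $I(s,A,A)$ for all $s$, which would make the intersection nonempty for a literal reading and render the hypothesis vacuous. I would therefore phrase everything in terms of pairs of \emph{distinct} vertices---a locating set is never required to separate a vertex from itself---and take the indistinct intersections to exclude the diagonal accordingly. Under this reading the equivalence above, and in particular the sufficiency claimed in the observation, follows at once.
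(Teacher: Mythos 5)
Your proof is correct; the paper states this as an unproved observation precisely because it is the definition-unwinding you carry out, so your argument matches the intended (implicit) justification. Your handling of the diagonal when $A=B$ --- restricting to pairs of \emph{distinct} vertices so that the pairs $(a,a)$ do not trivially make every intersection nonempty --- is a genuine subtlety that the paper's definition of indistinct sets leaves implicit, and it is worth making explicit as you do.
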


\subsection{Stems}
Now, consider how the indistinct set of $s$, $A$ and $B$  changes as the position of $s$ varies along a third branch $C$ in the given graph.

\begin{definition}
We say that two indistinct sets are \emph{combinatorially equivalent} if there is a one-to-one correspondence between the diagonal segments of the two sets with the following properties:
\begin{itemize}
\item If $s$ is a diagonal of one indistinct set, then the corresponding diagonal in the other set has the same slope as $s$.
\item If $s$ and $t$ are two diagonals of one indistinct set that intersect each other, then the corresponding diagonals in the other set also intersect each other.
\item If $s$, $t$, and $u$ are three diagonals of one indistinct set, with $t$ and $u$ both intersecting~$s$, then the corresponding two intersections of diagonals in the other intersecting set have the same (northwest-to-southeast or northeast-to-southwest) ordering.
\end{itemize}
Combinatorial equivalence is an equivalence relation and we define the \emph{combinatorial structure} of an indistinct set to be its equivalence class in this equivalence relation.
\end{definition}

\begin{definition}
We define a \emph{stem} to be a maximal contiguous subset of a branch $C$ of the given graph $G$ within which the indistinct sets of all points $s$ in $C$ and all pairs $(A,B)$ of branches have the same combinatorial structure.
\end{definition}

\begin{lemma}
\label{lem:const-breaks}
For a given pair of branches $(A,B)$ and a third branch $C$, there are $O(1)$ positions along $C$ such that the indistinct set of a vertex $s$ of $C$ and the pair $(A,B)$ changes structure at that position.
\end{lemma}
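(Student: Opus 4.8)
The plan is to make the dependence of the indistinct set on the position of $s$ along $C$ completely explicit, and then to recognize the whole family of indistinct sets as the parallel slices of a single low-complexity piecewise-linear object. First I would parametrize $s$ by its position $x$ along $C$, and parametrize the vertices of $A$ and $B$ by their positions $p$ and $q$. Because every interior vertex of $C$ has degree two, a shortest path from $s$ to a vertex $a$ of $A$ must leave $C$ through one of its two endpoints and enter $A$ through one of its two endpoints; hence $d(s,a)$ is the minimum of at most four affine functions of $(x,p)$, one for each choice of exit and entry endpoint (cycle branches contribute only a constant number of additional terms). Thus $d(s,a)$ is a concave piecewise-linear function of $(x,p)$ with $O(1)$ linear pieces, and likewise $d(s,b)$ is such a function of $(x,q)$.

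Next I would consider the single function $F(x,p,q)=d(s,a)-d(s,b)$ on the box $[0,|C|]\times[0,|A|]\times[0,|B|]$. As a difference of two concave piecewise-linear functions it is piecewise linear with only $O(1)$ linear pieces, so its zero set $Z=\{F=0\}$ is a piecewise-linear surface with $O(1)$ vertices, edges, and faces. The indistinct set of a vertex $s$ at position $x$ is exactly the planar slice $Z\cap\{X=x\}$, drawn in the $(p,q)$ plane: its segments are the traces of the faces of $Z$, and since each such face carries an equation whose $p$- and $q$-coefficients are $\pm 1$, these traces have slope $\pm 1$ with intercepts that vary linearly in $x$ as long as $x$ stays within one region where the active pieces of $d(s,a)$ and $d(s,b)$ do not change.

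It then remains to bound the number of values of $x$ at which the combinatorial structure of this slice can change. The structure is determined by the face pattern of the arrangement of the $O(1)$ slope-$\pm 1$ lines carrying the segments, and as $x$ increases each such line translates with constant speed within each of the $O(1)$ intervals delimited by the region boundaries (the ``tent peaks'' at which a shortest path switches endpoints). A change can occur only when a segment endpoint reaches the boundary of the box, when two segment crossings coincide, or when three segments become concurrent; each of these is the vanishing of a piecewise-linear function of $x$ with $O(1)$ pieces, and there are only $O(1)$ such functions, so there are $O(1)$ critical values of $x$ in all. Equivalently, the combinatorial type of a one-parameter family of parallel planar slices of a fixed $O(1)$-complexity surface changes only when the cutting plane sweeps past one of its $O(1)$ vertices (or an edge parallel to the cutting direction). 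Since the paper's notion of combinatorial equivalence is a function of this finer slice type, it too changes at only $O(1)$ positions along $C$.

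The step I expect to be the main obstacle is this last one: verifying that the specific invariants of the definition — the slopes of the diagonals, which pairs of them cross, and the northwest-to-southeast versus northeast-to-southwest ordering of crossings — are genuinely constant between consecutive critical values. In particular one must handle the self-crossings of the slice (saddle points of $F(x,\cdot,\cdot)$ at level zero) and the rotated-lattice parity caveat noted earlier, which restricts which crossing points actually correspond to undistinguished vertex pairs. These are exactly the places where a naive generic-sweep argument must be supplemented by checking that the degenerate and lattice-specific events still number only $O(1)$.
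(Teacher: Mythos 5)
Your proof is correct, and it rests on the same key fact as the paper's: because a shortest path from a vertex $s$ of $C$ to a point of $A$ or $B$ must leave $C$ through one of its two endpoints and enter $A$ or $B$ through one of theirs, every relevant distance is a piecewise-linear function of the position of $s$ along $C$ with $O(1)$ breakpoints. The difference is in how that fact is deployed. The paper enumerates two concrete event types (the switch of which end of $C$ a shortest path uses, and reorderings of the branch endpoints and the farthest points in the arrangement by distance from $s$) and bounds each by $O(1)$; you instead package everything into the single trivariate piecewise-linear function $F(x,p,q)=d(s,a)-d(s,b)$, observe that its zero set is a piecewise-linear surface of $O(1)$ total complexity, and note that the combinatorial type of its parallel slices can change only as the cutting plane passes one of the $O(1)$ vertices or cut-parallel edges of that surface. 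Your packaging is more systematic: it yields an explicit, checkable list of the degeneracies at which the structure can change (a segment appearing or vanishing, a crossing sliding off the end of a segment, three segments becoming concurrent), which is exactly the step the paper leaves implicit in the assertion following its proof; the paper's version is shorter and stays closer to the graph-theoretic picture. Two remarks on the obstacle you flag at the end: the events needed for the paper's third invariant (the northwest-to-southeast versus northeast-to-southwest order of two crossings on a common diagonal) are indeed covered by concurrences and by crossings leaving segments, both of which correspond to vertices of your surface; and the lattice-parity caveat does not enter this lemma at all, since combinatorial equivalence is defined on the diagonal segments themselves rather than on their integer points --- parity matters only later, when the algorithm tests intersections for integer solutions.
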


\begin{proof}
The structure changes only at two types of position along $C$:
\begin{itemize}
\item positions where the shortest path from $s$ to an endpoint of $A$ or $B$ switches from going through one end of $C$ to going through the other, and
\item positions where the endpoints of the two branches $A$ and $B$ and the points of maximum distance from $s$ change their relative positions in the arrangement by distance from $s$.
\end{itemize}
There are two possibilities for the shortest path from $s$ to a given endpoint $v$ of $A$ or $B$: it must consist of a path in $C$ from $s$ to one endpoint of $C$ together with the shortest path in $G$ from that endpoint of $C$ to $v$. These two paths can change their ordering only once as $s$ moves along $C$. Therefore, each endpoint of $A$ or $B$ contributes at most four breakpoints of the first type.

Because it has only one breakpoint of the first type, each endpoint of $A$ or $B$ has a distance from $s$ that (as a function of the position of $s$ along $C$) is piecewise linear with only one breakpoint. By similar reasoning, the distance from $s$ to the farthest point within $A$ or $B$ is also piecewise linear with $O(1)$ breakpoints. Therefore, in the arrangement by distance, these points can exchange positions only $O(1)$ times.
\end{proof}

At all points of $C$ other than these, the indistinct set for $s$, $A$, and $B$ maintains the same combinatorial structure.  The positions of its segments either remain fixed as $s$ varies along the path, or they shift linearly with the position of $s$ along $C$.

\begin{lemma}
Every graph $G$ with $b$~branches has $O(b^3)$ stems.
\end{lemma}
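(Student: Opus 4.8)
The plan is to bound the number of stems lying on each branch separately and then sum over all branches. Recall that a stem is a maximal contiguous piece of a fixed branch $C$ on which the indistinct set for $s$, $A$, and $B$ keeps the same combinatorial structure, simultaneously for every pair $(A,B)$ of branches, as $s$ ranges over the piece. Thus the endpoints of the stems lying on $C$ are precisely the \emph{breakpoints}: the positions along $C$ at which the structure of at least one of these indistinct sets changes.

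First I would fix a branch $C$ and overlay, for all pairs $(A,B)$, the sets of positions along $C$ where the indistinct set of $s$, $A$, $B$ changes structure. By \autoref{lem:const-breaks}, each individual pair $(A,B)$ contributes only $O(1)$ such positions. Since $A$ and $B$ each range over the $b$ branches of $G$ (and we allow $A=B$), there are $O(b^2)$ pairs, so the total number of breakpoints along $C$ is $O(b^2)$. These breakpoints cut $C$ into $O(b^2)$ contiguous intervals, and on each such interval every indistinct set keeps its combinatorial structure by the choice of breakpoints; hence $C$ contains $O(b^2)$ stems.

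Finally I would sum this per-branch bound over the $b$ possible choices of the branch $C$, giving $O(b^3)$ stems in total.

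The counting itself is routine; the only subtlety to get right is the definition of a stem, namely that the constancy of combinatorial structure is required simultaneously over all pairs $(A,B)$, so that the breakpoint set on $C$ must be taken as the \emph{union} of the per-pair breakpoint sets rather than the breakpoints arising from a single pair. All of the genuine difficulty — bounding the number of structural changes for one pair — has already been discharged in \autoref{lem:const-breaks}, so no further geometric analysis is needed here.
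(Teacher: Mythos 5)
Your argument is correct and matches the paper's proof essentially verbatim: fix a branch $C$, take the union over all $O(b^2)$ pairs $(A,B)$ of the $O(1)$ breakpoints guaranteed by \autoref{lem:const-breaks} to get $O(b^2)$ stems per branch, and sum over the $b$ branches. Your explicit remark that the breakpoint set must be the union over all pairs is a nice clarification of why the per-branch bound is $O(b^2)$ rather than $O(1)$, but it is the same counting the paper performs.
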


\begin{proof}
There are $O(b^2)$ pairs of branches, each of which (by \autoref{lem:const-breaks}) contributes $O(1)$ breakpoints to branch $C$, so each branch has $O(b^2)$ stems and there are $O(b^3)$ stems in the whole graph.
\end{proof}

\subsection{The algorithm}

\begin{lemma}
\label{lem:locating-set-upper-bound}
The metric dimension of every graph with $b$ branches is $O(b)$.
\end{lemma}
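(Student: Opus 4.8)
The plan is to exhibit a locating set $S$ of size $O(b)$ directly and then verify that it resolves every pair of vertices by appealing to the Observation: it suffices that for each pair of branches $(A,B)$ the indistinct sets of the points of $S$, restricted to $A\times B$, have empty common intersection, where for $A=B$ we only need the intersection to avoid the off-diagonal pairs (the diagonal being the trivial coincidences $a=b$). I would take $S$ to consist of all vertices of degree different from two (the endpoints shared among the branches, of which there are $O(b)$ since the total number of branch-ends is at most $2b$), together with a constant number of interior landmarks chosen on each branch; as there are $b$ branches this keeps $|S|=O(b)$.

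The structural fact I would lean on is that the two endpoints $\{p,q\}$ of a branch $A$ separate the interior of $A$ from the rest of $G$, so that for any landmark $s$ outside $A$ the distance to an interior vertex $u$ satisfies $d(s,u)=\min\{d(s,p)+\mathrm{pos}_p(u),\,d(s,q)+\mathrm{pos}_q(u)\}$, where $\mathrm{pos}$ measures position along the branch; this is exactly the phenomenon underlying \autoref{lem:monotonic}. For the within-branch case $A=B$, I would use that the distance from each endpoint of $A$ is unimodal along $A$, so its only coincidences are mirror-symmetric pairs about the two peaks. A constant number of interior landmarks placed on $A$ then breaks all of these symmetries and empties the off-diagonal intersection for $(A,A)$ (with a slightly more careful but still $O(1)$ choice handling the case of a cyclic branch).

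The crux is the cross-branch case $A\neq B$, and specifically keeping the total number of landmarks at $O(b)$ rather than $O(b^2)$. Here I would use that a landmark at an endpoint of $A$ has a distance function that is monotone in at most three pieces along any branch, so the combined distance vector to the shared endpoints of $A$ is injective up to $O(1)$ ambiguity on $A$, and symmetrically on $B$. Consequently the common indistinct set of all the endpoint-landmarks already has only $O(1)$ surviving pairs in $A\times B$. The delicate point is that we cannot afford fresh landmarks to kill these residuals for each of the $\Theta(b^2)$ branch pairs; instead I would argue that the $O(1)$ interior landmarks placed once per branch—hence $O(b)$ in total and reused across every pair involving that branch—shift the diagonal segments of the indistinct sets enough to clear all residual coincidences simultaneously. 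Proving that a branch-local rather than pair-local choice of landmarks suffices to eliminate every residual for all partner branches at once is the main obstacle, and I expect it to rest on the fact that each residual pair is pinned by the position data of $a$ and $b$ together with distances from a bounded endpoint set, so that an interior landmark whose distance function has a distinct offset or slope along the branch perturbs all the relevant segments off the residual points.
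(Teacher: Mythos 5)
Your construction is the same as the paper's: its entire proof is the single sentence that a set containing the endpoints of all branches plus an interior point of each branch ``is certainly a valid locating set'' of size $O(b)$, with no verification given. The cross-branch verification you single out as the main obstacle is exactly the step the paper declines to spell out; it does go through along the lines you sketch (any pair in $A\times B$ surviving all endpoint landmarks lies on a common diagonal segment forced by the endpoint distances, and an interior landmark of $A$ reaches $a$ along the branch but reaches $b$ only via an endpoint of $A$, so its distance function has a different slope or offset on the two coordinates and clears that segment except at $O(1)$ degenerate positions), so your proposal matches the paper's approach while being more candid than the paper about what actually needs checking.
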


\begin{proof}
A set $S$ that includes the endpoints of all branches and an interior point of each branch is certainly a valid locating set, and has $|S|=O(b)$.
\end{proof}

\begin{theorem}
The metric dimension of any graph with $n$ vertices and $b$ branches may be determined in time $O(n)+2^{O(b^3\log b)}\log n$.
\end{theorem}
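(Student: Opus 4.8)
The plan is to use an $O(n)$-time preprocessing step to reduce the problem to a purely combinatorial question on a structure whose size depends only on $b$, and then to brute-force over all combinatorially distinct candidate locating sets, verifying each one with the help of \autoref{lem:locating-set-upper-bound} and the observation that a set is locating exactly when, for every pair of branches, the indistinct sets of its members have empty common intersection.

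First I would contract $G$, in $O(n)$ time, to the multigraph $H$ whose vertices are the branch endpoints and whose edges are the $b$ branches, recording each branch length as an $O(\log n)$-bit integer together with the pairwise distances between branch endpoints in $G$ (computable from the contracted structure of size $O(b)$). After this step the size of the graph enters the running time only through $O(\log n)$-bit arithmetic on branch lengths and interior positions. By \autoref{lem:locating-set-upper-bound} it suffices to search for locating sets of size $O(b)$.

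Next I would exploit the stem decomposition. Within a single stem the combinatorial structure of every indistinct set is fixed, and as a landmark moves through the stem each diagonal segment of each indistinct set either stays put or shifts linearly with the landmark's offset. Consequently the contribution of a landmark to all $O(b^2)$ indistinct sets is determined by which stem contains it together with a bounded amount of additional positional data, and I would argue that only $\mathrm{poly}(b)$ essentially different choices per stem ever need be considered. Since there are $O(b^3)$ stems, enumerating one such choice (including the option of placing no landmark) per stem gives at most $(\mathrm{poly}(b))^{O(b^3)} = 2^{O(b^3\log b)}$ candidate configurations, and every locating set of size $O(b)$ is captured by one of them.

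Finally, for each candidate configuration I would verify the empty-intersection condition over all $O(b^2)$ pairs of branches. For a fixed pair $(A,B)$, each chosen landmark contributes $O(1)$ diagonal segments whose endpoints are linear functions of its offset, so the common intersection of these indistinct sets is an arrangement of $\mathrm{poly}(b)$ diagonal segments that can be tested, in $\mathrm{poly}(b)$ arithmetic operations, for the presence of a genuine undistinguished pair; the minimum size over all configurations that pass every test is the metric dimension. The main obstacle is this verification step, and in particular the integrality subtlety noted after \autoref{lem:monotonic}: an intersection point of two diagonals corresponds to an undistinguished pair of vertices only when it is an integer point with even coordinate sum in the rotated system, so emptiness cannot be read off from the combinatorial type alone but must be decided from the exact integer offsets. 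Handling this — resolving, for each stem, an exact integer landmark position consistent with all the parity constraints, which a binary search accomplishes using $O(\log n)$-bit arithmetic — is what introduces the $\log n$ factor and is the most delicate part of the argument; once it is in place, the total running time is $O(n) + 2^{O(b^3\log b)} \cdot \mathrm{poly}(b) \cdot \log n = O(n) + 2^{O(b^3\log b)}\log n$.
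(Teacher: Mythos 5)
Your overall skeleton (contract to an $O(b)$-size structure in $O(n)$ time, use \autoref{lem:locating-set-upper-bound} to bound the search to $O(b)$ landmarks, decompose branches into stems, enumerate combinatorial configurations, verify empty intersections of indistinct sets) matches the paper's, but the two steps that carry the actual technical weight are missing or do not work as described. First, your claim that ``only $\mathrm{poly}(b)$ essentially different choices per stem ever need be considered'' is asserted without justification, and it is not clear how to justify it: within a single stem a landmark has up to $\Theta(n)$ integer positions, and whether the diagonal segments it contributes to an indistinct set meet the segments contributed by \emph{another} landmark depends on comparisons between linear functions of \emph{both} offsets (and on parities). So the number of relevant positions for one landmark is not an absolute $\mathrm{poly}(b)$ quantity determined stem-by-stem; it depends on where the other landmarks sit. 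The paper discretizes differently: it guesses only the stem of each landmark ($2^{O(b\log b)}$ choices) and then, for each of the $O(b^2)$ branch pairs, nondeterministically guesses a \emph{weak ordering} of the $x$- and $y$-coordinates of the $O(b)$ segment endpoints together with their parities ($2^{O(b\log b)}$ per pair, $2^{O(b^3\log b)}$ overall). Emptiness of the intersection can then be decided from this purely ordinal and parity information.

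Second, your plan to resolve integrality by ``a binary search'' for an exact integer landmark position ``for each stem'' cannot work, because the realizability question is a \emph{joint} feasibility problem: the guessed orderings impose linear inequalities and equalities coupling the positions of different landmarks on different stems, and these must all be satisfied simultaneously by integers. The paper handles this by formulating the realizability of a guessed configuration as an integer linear program with $O(b)$ variables and $O(b^3)$ constraints whose coefficients have $O(\log n)$ bits, and invoking Lenstra-type fixed-dimension ILP algorithms to solve it in $2^{O(b\log b)}\log n$ time; this is where the $\log n$ factor genuinely comes from, and no per-stem independent search substitutes for it. Without a concrete discretization argument replacing the ``$\mathrm{poly}(b)$ choices per stem'' claim and without a mechanism for the joint integer realizability check, the proof as proposed has a gap at its core.
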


\begin{proof}
We may assume without loss of generality that the graph is connected, for otherwise we could partition it into connected components and process each component separately.
Partitioning the graph into branches may be performed in time $O(n)$. After this step all shortest path computations in the given graph can be performed by instead using a weighted graph with $O(b)$ vertices and edges, in which each edge represents a branch of the original graph and is weighted by that branch's length. In particular, after partitioning the graph into branches, we may partition the branches into stems in total time~$b^{O(1)}$.

We search for locating sets of size $O(b)$ (according to \autoref{lem:locating-set-upper-bound})
by choosing nondeterministically the number of vertices in the locating set $S$, and the stem containing each vertex (but not the location of the vertex within the stem). There are $2^{O(b\log b)}$ possible choices of this type. This choice determines the combinatorial structure of each indistinct set.

Next, for each pair $(A,B)$ of branches (allowing $A=B$) and each member $s$ of the locating set (now associated with a specific stem but not placed at a particular vertex within that stem), we consider the line segments forming the indistinct sets for $s$, $A$ and $B$, in the rotated coordinate system for which these line segments are horizontal and vertical. For a given pair $(A,B)$ there are $O(b)$ line segments ($O(1)$ for each member of the locating set) and each line segment may be specified by the two Cartesian coordinate pairs for its endpoints. Rather than choosing these coordinate values numerically, we choose nondeterministically the sorted order of the $x$-coordinates and similarly the sorted order of the $y$-coordinates, allowing ties in our nondeterministic choices. In other words, separately for the $x$ and $y$ coordinates, we select a weak ordering of the segment endpoints, specifying for any two segment endpoints whether they have equal coordinate values or, if not, which one has a smaller coordinate value than the other.
We also choose nondeterministically the parity of each Cartesian coordinate.
Each of the $O(b^2)$ pairs of branches has $2^{O(b\log b)}$ choices for these orderings and parities, so there are $2^{O(b^3\log b)}$ possible nondeterministic choices overall.
For each such choice and each pair $(A,B)$ we verify that, if we can find a placement of the vertices of the locating set that gives rise to the chosen sorted orderings, then the intersection of the indistinct sets for $A$ and $B$ will not contain any integer points (in the un-rotated coordinate system).

To test whether two indistinct sets have a non-empty intersection, we test each pair of a line segment from one set and a line segment from the other set for an intersection.
Two horizontal line segments intersect each other if and only if they have the same $y$-coordinate and overlapping intervals of $x$-coordinates; a symmetric calculation is valid for two vertical line segments. A horizontal line segment intersects a vertical line segment if and only if the $y$-coordinate of the horizontal segment is within the range of $y$-coordinates of the vertical segment, the $x$-coordinate of the vertical segment is within the range of $x$-coordinates of the horizontal segment, and the parities of the coordinates of the two segments cause their crossing point to land on an integer point rather than on a half-integer point. In this way, the existence of an intersection point can be determined in time polynomial in $b$, using only the information about the sorted order and parities of coordinates that we have chosen nondeterministically.

When these nondeterministic choices find a collection of indistinct sets, and a sorted ordering of the features of those sets, for which every pair of branches has an empty intersection of indistinct sets, it remains to determine whether there exists a placement of each locating set vertex within its stem, in order to cause the indistinct set features to have the sorted orders that we have already chosen. Each ordering constraint between two features that are consecutive in one of the sorted orders translates directly to a linear constraint between the positions of two locating set vertices $s$ and $s'$ within their stems; therefore, the problem of finding positions that satisfy all of these constraints can be formulated and solved as an integer linear programming feasability problem, with $O(b)$ variables (the positions of the locating vertices on their stems) and $O(b^3)$ constraints (sorted orderings of $O(b)$ items for each of $O(b^2)$ pairs of branches, specified with numbers of $O(\log n)$ bits (the lengths of the stems).
By standard algorithms for low-dimensional integer linear programming problems, this problem can be solved in time $2^{O(b\log b)}\log n$.~\cite{Len-MOR-83,Kan-MOR-87,FraTar-Comb-87,Cla-JACM-95}.

The product of the numbers of nondeterministic choices made by the algorithm with the time for integer linear programming for each choice gives the stated time bound.
\end{proof}

\begin{corollary}
The metric dimension of a graph with max leaf number $\ell$ may be determined in time $O(n)+2^{O(\ell^6\log \ell)}\log n$.
\end{corollary}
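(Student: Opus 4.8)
The plan is to obtain this result as an immediate consequence of the main theorem, by translating the max leaf number parameter into a bound on the number of branches. First I would invoke \autoref{lem:branch-upper-bound-by-leaf}, which guarantees that a connected graph with max leaf number $\ell$ has $b=O(\ell^2)$ branches. Because the max leaf number is only defined for connected graphs, I may assume $G$ is connected; and in any case the theorem already reduces the disconnected situation to its connected components, so no additional handling is required.

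With the bound $b=O(\ell^2)$ in hand, the remaining step is to substitute it into the running time $O(n)+2^{O(b^3\log b)}\log n$ supplied by the theorem. Here I would track the two pieces of the exponent separately: $b^3=O(\ell^6)$, and $\log b=O(\log \ell)$ since $b$ is polynomially bounded in $\ell$. Multiplying these gives $b^3\log b=O(\ell^6\log \ell)$, so that $2^{O(b^3\log b)}=2^{O(\ell^6\log \ell)}$, and the overall bound becomes $O(n)+2^{O(\ell^6\log \ell)}\log n$, exactly as claimed. Since the theorem's time bound is monotonically increasing in $b$, it suffices that $b$ is at most a constant multiple of $\ell^2$, rather than equal to any particular value.

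There is no genuine obstacle in this argument; the only point to verify carefully is the composition of the asymptotic estimates, in particular that raising the base of the exponential to a polynomial power of $\ell$ merely multiplies the exponent by a constant factor after taking logarithms (this is precisely what turns $\log b$ into $O(\log\ell)$). All other quantities in the theorem's time bound — the $O(n)$ preprocessing term and the $\log n$ factor arising from the integer linear programming step — are independent of the parameter and carry over unchanged.
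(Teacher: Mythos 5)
Your proposal is correct and is exactly the derivation the paper intends (the corollary is stated without an explicit proof, but it follows from substituting the bound $b=O(\ell^2)$ of \autoref{lem:branch-upper-bound-by-leaf} into the theorem's running time, just as you do). The arithmetic $b^3\log b=O(\ell^6\log\ell)$ and the observation that the $O(n)$ and $\log n$ factors are unaffected are both accurate.
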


\section{Conclusions}

We have shown that metric dimension is fixed-parameter tractable in the max leaf number, but our algorithms have time bounds that are too high to be practical. It would therefore be of interest to reduce this dependence, for instance to be singly-exponential in the max leaf number.

It would also be of interest to extend this method to stronger parameters. For instance, the fact that the metric dimension is relatively easy on trees~\cite{HarMel-AC-76} makes it plausible that, for general graphs, we could reduce the problem to one on the 2-core of the graph (the subgraph that remains after repeatedly removing degree-one vertices). The branch-count of the 2-core of any graph is proportional to the graph's cyclomatic number, so such a result would mean that the metric dimension could be computed in fixed-parameter tractable time in the cyclomatic number. Is this possible?

\subsection*{Acknowledgements}
This research was supported in part  by the National Science Foundation under grants 0830403 and 1217322, and by the Office of Naval Research under MURI grant N00014-08-1-1015.

{\raggedright
\bibliographystyle{abuser}
\bibliography{branch-count}}

\begin{thebibliography}{10}

\bibitem{AdiChiSau-ISAAC-10}
A.~Adiga, R.~Chitnis, and S.~Saurabh.
\newblock {Parameterized algorithms for boxicity}.
\newblock {\em Algorithms and Computation: 21st International Symposium, ISAAC
  2010, Jeju Island, Korea, December 15-17, 2010, Proceedings, Part I},
  pp.~366{--}377. Springer, Berlin, Lecture Notes in Computer Science 6506,
  2010, \href{http://dx.doi.org/10.1007/978-3-642-17517-6\_33}%
{doi:10.1007/978-3-642-17517-6\_33}.

\bibitem{BodJanKra-TCS-13}
H.~L. Bodlaender, B.~M.~P. Jansen, and S.~Kratsch.
\newblock {Kernel bounds for path and cycle problems}.
\newblock {\em Theoretical Computer Science} 511:117{--}136, 2013,
  \href{http://dx.doi.org/10.1016/j.tcs.2012.09.006}%
{doi:10.1016/j.tcs.2012.09.006}.

\bibitem{Cla-JACM-95}
K.~L. Clarkson.
\newblock {Las Vegas algorithms for linear and integer programming when the
  dimension is small}.
\newblock {\em J. ACM} 42(2):488{--}499, 1995,
  \href{http://dx.doi.org/10.1145/201019.201036}%
{doi:10.1145/201019.201036}.

\bibitem{Cou-IC-90}
B.~Courcelle.
\newblock {The monadic second-order logic of graphs. I. Recognizable sets of
  finite graphs}.
\newblock {\em Inform. and Comput.} 85(1):12{--}75, 1990,
  \href{http://dx.doi.org/10.1016/0890-5401(90)90043-H}%
{doi:10.1016/0890-5401(90)90043-H}.

\bibitem{DiaPotSer-ESA-12}
J.~D{\'\i}az, O.~Pottonen, M.~J. Serna, and E.~J. van Leeuwen.
\newblock {On the complexity of metric dimension}.
\newblock {\em Proc. 20th Eur. Symp. Algorithms (ESA 2012)}, pp.~419{--}430.
  Springer, Lecture Notes in Computer Science 7501, 2012,
  \href{http://dx.doi.org/10.1007/978-3-642-33090-2\_37}%
{doi:10.1007/978-3-642-33090-2\_37}.

\bibitem{EpsLevWoe-WG-12}
L.~Epstein, A.~Levin, and G.~J. Woeginger.
\newblock {The (weighted) metric dimension of graphs: hard and easy cases}.
\newblock {\em Proc. 38th Int. Worksh. Graph-Theoretic Concepts in Computer
  Science (WG 2012)}, pp.~114{--}125. Springer, Lecture Notes in Computer
  Science 7551, 2012, \href{http://dx.doi.org/10.1007/978-3-642-34611-8\_14}%
{doi:10.1007/978-3-642-34611-8\_14}.

\bibitem{FelHerRos-ESA-13}
M.~R. Fellows, D.~Hermelin, F.~Rosamond, and H.~Shachnai.
\newblock {Tractable parameterizations for the minimum linear arrangement
  problem}.
\newblock {\em Algorithms {--} ESA 2013: 21st Annual European Symposium, Sophia
  Antipolis, France, September 2-4, 2013, Proceedings}, pp.~457{--}468.
  Springer, Lecture Notes in Computer Science 8125, 2013,
  \href{http://dx.doi.org/10.1007/978-3-642-40450-4\_39}%
{doi:10.1007/978-3-642-40450-4\_39}.

\bibitem{FelLan-SJDM-92}
M.~R. Fellows and M.~A. Langston.
\newblock {On well-partial-order theory and its application to combinatorial
  problems of VLSI design}.
\newblock {\em SIAM J. Discrete Math.} 5(1):117{--}126, 1992,
  \href{http://dx.doi.org/10.1137/0405010}%
{doi:10.1137/0405010}.

\bibitem{FelLokNus-TCS-09}
M.~R. Fellows, D.~Lokshtanov, N.~Misra, M.~Mnich, F.~Rosamond, and S.~Saurabh.
\newblock {The complexity ecology of parameters: an illustration using bounded
  max leaf number}.
\newblock {\em Theory of Computing Systems} 45(4):822{--}848, 2009,
  \href{http://dx.doi.org/10.1007/s00224-009-9167-9}%
{doi:10.1007/s00224-009-9167-9}.

\bibitem{FerJoaDie-TCS-11}
H.~Fernau, J.~Kneis, D.~Kratsch, A.~Langer, M.~Liedloff, D.~Raible, and
  P.~Rossmanith.
\newblock {An exact algorithm for the maximum leaf spanning tree problem}.
\newblock {\em Theoretical Computer Science} 412(45):6290{--}6302, 2011,
  \href{http://dx.doi.org/10.1016/j.tcs.2011.07.011}%
{doi:10.1016/j.tcs.2011.07.011}.

\bibitem{FraTar-Comb-87}
A.~Frank and {\'E}.~Tardos.
\newblock {An application of simultaneous Diophantine approximation in
  combinatorial optimization}.
\newblock {\em Combinatorica} 7(1):49{--}65, 1987,
  \href{http://dx.doi.org/10.1007/BF02579200}%
{doi:10.1007/BF02579200}.

\bibitem{GarJoh-79}
M.~R. Garey and D.~S. Johnson.
\newblock {\em {Computers and Intractability: A Guide to the Theory of
  NP-Completeness}}.
\newblock W. H. Freeman, 1979.

\bibitem{HarMel-AC-76}
F.~Harary and R.~A. Melter.
\newblock {On the metric dimension of a graph}.
\newblock {\em Ars Combinatoria} 2:191{--}195, 1976.

\bibitem{HarNic-CCC-13}
S.~Hartung and A.~Nichterlein.
\newblock {On the parameterized and approximation hardness of metric
  dimension}.
\newblock {\em IEEE Conference on Computational Complexity (CCC 2013)},
  pp.~266{--}276, 2013, \href{http://dx.doi.org/10.1109/CCC.2013.36}%
{doi:10.1109/CCC.2013.36},
  \href{http://arxiv.org/abs/1211.1636}{arXiv:1211.1636}.

\bibitem{Kan-MOR-87}
R.~Kannan.
\newblock {Minkowski's convex body theorem and integer programming}.
\newblock {\em Math. Oper. Res.} 12(3):415{--}440, 1987,
  \href{http://dx.doi.org/10.1287/moor.12.3.415}%
{doi:10.1287/moor.12.3.415}.

\bibitem{KhuRagRos-DAM-96}
S.~Khuller, B.~Raghavachari, and A.~Rosenfeld.
\newblock {Landmarks in graphs}.
\newblock {\em Discrete Appl. Math.} 70(3):217{--}229, 1996,
  \href{http://dx.doi.org/10.1016/0166-218X(95)00106-2}%
{doi:10.1016/0166-218X(95)00106-2}.

\bibitem{KleWes-SJDM-91}
D.~J. Kleitman and D.~B. West.
\newblock {Spanning trees with many leaves}.
\newblock {\em SIAM Journal on Discrete Mathematics} 4(1):99{--}106, 1991,
  \href{http://dx.doi.org/10.1137/0404010}%
{doi:10.1137/0404010}.

\bibitem{Lam-Algo-12}
M.~Lampis.
\newblock {Algorithmic meta-theorems for restrictions of treewidth}.
\newblock {\em Algorithmica} 64(1):19{--}37, 2012,
  \href{http://dx.doi.org/10.1007/s00453-011-9554-x}%
{doi:10.1007/s00453-011-9554-x}.

\bibitem{Lam-LMCS-14}
M.~Lampis.
\newblock {Model checking lower bounds for simple graphs}.
\newblock {\em Logical Methods in Computer Science} 10(1):1:18, 2014.

\bibitem{Len-MOR-83}
H.~W. Lenstra, Jr.
\newblock {Integer programming with a fixed number of variables}.
\newblock {\em Math. Oper. Res.} 8(4):538{--}548, 1983,
  \href{http://dx.doi.org/10.1287/moor.8.4.538}%
{doi:10.1287/moor.8.4.538}.

\bibitem{NesOss-12}
J.~Ne{\v{s}}et{\v{r}}il and P.~Ossona~de Mendez.
\newblock {\em {Sparsity: Graphs, Structures, and Algorithms}}.
\newblock Algorithms and Combinatorics~28. Springer, 2012, pp.~115{--}144,
  \href{http://dx.doi.org/10.1007/978-3-642-27875-4}%
{doi:10.1007/978-3-642-27875-4}.

\end{thebibliography}

\end{document}